\documentclass[11pt]{article}
\usepackage{amsmath,amssymb,amsthm}
\usepackage[color,all]{xy}

\title{
Martingale Cohomology, Holonomy, and Homological Arbitrage\\
\large From Martingale Cohomology to Loop Effects in Financial Markets
}

\author{Takanori Adachi\footnote{Graduate School of Management, Tokyo Metropolitan University, Marunouchi Eiraku Bldg. 18F, 1-4-4 Marunouchi, Chiyoda-ku, Tokyo 100-0005 Japan\\Email: taka.adachi@tmu.ac.jp}
\thanks{This work was supported by JSPS KAKENHI Grant Number 24K04941.}\footnote{
\noindent\textbf{MSC (2020)}
	60G42 (primary), 
	18G60, 
	91G10, 
	60A10  
}
}

\date{\today}

\newtheorem{definition}{Definition}
\newtheorem{remark}{Remark}
\newtheorem{proposition}{Proposition}
\newtheorem{lemma}{Lemma}

\newcommand{\Prob}{\mathbf{Prob}}
\newcommand{\mpProb}{\mathbf{mpProb}}
\newcommand{\Ban}{\mathbf{Ban}}
\newcommand{\Hol}{\mathrm{Hol}}
\newcommand{\Exp}{\mathcal{E}}

\newcommand{\bdelta}{\mathbf{\delta}}

\newcommand{\prob}{\mu}
\newcommand{\Id}{\mathrm{id}}

\begin{document}
\maketitle

\begin{abstract}
\noindent
We introduce a transport cohomological framework for categorical
filtrations. Given a contravariant filtration
$
F:\mathcal T^{op}\to\Prob
$
on a small category \(\mathcal T\), conditional expectation induces
transport operators between local probabilistic states. Using the
simplicial structure of the nerve \(N_\bullet(\mathcal T)\), we
construct simplex-local cochain complexes associated with parametrized
simplices and study their transport cohomology.

The resulting framework naturally produces loop effects and holonomy
structures. In particular, transport around closed simplicial histories
may generate nontrivial probabilistic distortions, even when the initial
and terminal objects coincide. The associated holonomy operators encode
global transport effects between probabilistic states and detect
obstructions generated by loop transport.

This leads to the notion of homological arbitrage, understood as a
global transport phenomenon emerging from probabilistic distortion along
loops. From this viewpoint, the essential source of loop effects is the
probabilistic distortion generated by transport around closed simplicial
histories.

The present framework is structurally analogous to parallel transport
and holonomy in differential geometry, providing a geometric viewpoint
on categorical filtrations and probabilistic transport structures.
\end{abstract}

\section{Introduction}
\label{sec:introduction}

The present paper explores a categorical and cohomological viewpoint on
filtrations in probability theory. While filtrations are fundamental
objects in stochastic analysis and mathematical finance, their study
from the viewpoint of category theory and transport geometry remains
largely undeveloped.

The purpose of this paper is to introduce a transport cohomological
framework for categorical filtrations and to investigate global loop
effects arising from probabilistic transport structures.

Let
$
F:\mathcal T^{op}\to\Prob
$
be a filtration defined as
a contravariant functor on a small category
\(
\mathcal T
\),
where
\(
\Prob
\)
denotes the category of probability spaces with null-preserving
measurable maps. 
This formulation extends classical filtrations while allowing for a richer class of information flows,
including branching and recombination.
To the best of our knowledge, with a few exceptions \cite{adachi_2025a},
such a clear formulation has rarely been systematically established in the field of mathematical finance.
Associated with such a filtration, conditional expectation 
induces transport operators between local probabilistic states. 
The resulting transport structures naturally interact with the simplicial geometry of the nerve
\(
N_\bullet(\mathcal T)
\).

A first observation is that martingale structures admit a natural
categorical interpretation. In particular, the martingale condition may
be expressed through the kernel of a transport differential operator.
However, higher-order transport phenomena exhibit a fundamentally local
character and are not naturally captured by global cochain systems over
the entire category
\(
\mathcal T
\).

To describe these higher transport structures, we introduce
simplex-local cochain complexes attached to parametrized simplices
$
\sigma\in N_k(\mathcal T).
$
The resulting \(\sigma\)-gauge complexes encode transport data along
simplicial histories and satisfy natural cochain conditions induced by
conditional expectation transport.

This local viewpoint leads naturally to loop effects and holonomy
phenomena. Even when a simplicial loop begins and ends at the same
object of
\(
\mathcal T
\),
transport around the loop may generate a nontrivial probabilistic
distortion. The associated holonomy operators describe accumulated
transport effects between probabilistic states and detect global
obstructions encoded in the transport history of the simplex.

From this perspective, the essential source of loop effects is the
probabilistic distortion generated by transport around closed simplicial
histories. This motivates the notion of homological arbitrage developed
in the present paper. Here, homological arbitrage should be understood
not as a local pricing inconsistency, but as a global transport effect
emerging from loop transport and detected through cohomological and
holonomy structures.

The present framework is structurally close to holonomy theory and
parallel transport in differential geometry. In particular, the
transport operators introduced here behave analogously to probabilistic
parallel transports, while nontrivial holonomy represents a global
obstruction to probabilistic trivialization along loops.

Finally, we note that the holonomy perspective developed in this paper is closely related
to the loop-based arbitrage mechanisms studied in \cite{adachi_2026a},
where arbitrage arises from global phase-like effects.
The present work provides a cohomological framework that complements and clarifies these phenomena.

The paper is organized as follows. 
In Section~\ref{sec:filtrations},
we introduce categorical filtrations, distortion operators, and martingale structures. 
Section~\ref{sec:localTransportComplexes}
develops the local transport complexes associated with parametrized simplices and establishes the corresponding cochain conditions.
In Section~\ref{sec:homologicalArb},
we study transport cohomology, loops, holonomy, and homological arbitrage.
Section~\ref{sec:examples}
presents several examples illustrating transport consistency and nontrivial loop effects.

\section{Martingales on categorical filtrations}
\label{sec:filtrations}

In this section, we recall the notion of a categorical filtration and the corresponding martingale condition. We then explain why a naive higher-order extension in the original $\mu$-gauge fails to produce a cochain complex, which motivates the gauge-change construction developed in the next section.

\subsection{$\mathcal{T}$-filtrations}

Let $\mathcal{T}$ be a small category, regarded as a time domain.

\begin{definition}[\cite{ANR_2020b}]
A \emph{$\mathcal{T}$-filtration} is a contravariant functor
\begin{equation}
\label{eq:tFiltration}
F : \mathcal{T}^{op} \longrightarrow \Prob,
\end{equation}
where
$\Prob$
is the category whose objects are all probability spaces and arrows are measurable maps such that
their inverse preserve null-sets (\emph{null-preserving map}).
\end{definition}

Thus, for each object $t \in \mathcal{O}_{\mathcal T}$, we have a probability space
\begin{equation}
\label{eq:FtNotation}
F(t)=(\Omega_t,\mathcal F_t,\prob_t),
\end{equation}
and for each arrow $i:s\to t$ in $\mathcal T$, we have a measurable map
\[
F(i):F(t)\to F(s)
\]
satisfying
\[
\mu_t \circ F(i)^{-1} \ll \mu_s.
\]
Functoriality means
\[
F(1_t)=1_{F(t)},
\qquad
F(j\circ i)=F(i)\circ F(j)
\]
for composable arrows $s \xrightarrow{i} t \xrightarrow{j} u$.


\subsection{Conditional expectation and the density operator}
\label{sec:condExpAndDensityOp}

Let
\begin{equation}
\label{eq:condExpF}
\Exp : \Prob \longrightarrow \Ban
\end{equation}
denote the conditional expectation functor \cite{AR_2019},
characterized by the identity
\[
\int_B \mathcal{E}(\varphi)(f)\, d\mu_Y
=
\int_{\varphi^{-1}(B)} f\, d\mu_X, 
\qquad (\forall B\in\mathcal F_Y),
\]
for a $\Prob$-arrow 
$\varphi : X \to Y$
and
$f \in \Exp(X) :=  L^1(X)$. 

For each arrow $i:s\to t$ in $\mathcal T$, we write
\[
(\Exp\circ F)i : (\Exp\circ F)t \longrightarrow (\Exp\circ F)s
\]
for the corresponding conditional expectation operator.

\begin{definition}
An \emph{$F$-adapted process} is an element
\[
f=\{f_t\}_{t\in \mathcal O_{\mathcal T}}
\]
of
\begin{equation}
\label{eq:AdapF}
\mathbf{Adap}(F)
:=
C^0
:=
\prod_{t\in \mathcal O_{\mathcal T}} (\Exp\circ F)t.
\end{equation}
\end{definition}

\begin{definition}
An \emph{$F$-martingale} is an $F$-adapted process
\[
f=\{f_t\}_{t\in \mathcal O_{\mathcal T}}
	\in
\mathbf{Adap}(F)
\]
such that for every arrow $i:s\to t$ in $\mathcal T$,
\begin{equation}
\label{eq:fMartingale_paper}
(\Exp\circ F)i(f_t)=f_s\cdot dF(i),
\end{equation}
where
\begin{equation}
\label{eq:dF_def_paper}
dF(i)
	:=
(\Exp \circ F) i \big(1_{F(t)}\big)
	=
\frac{d(\prob_t\circ (F(i))^{-1})}{d\prob_s}
\in (\Exp\circ F)s .
\end{equation}
\end{definition}

Note that \(dF(i)=1\) whenever \(F(i)\) is measure-preserving.

\begin{lemma}
For composable arrows
\[
s\xrightarrow{i} t \xrightarrow{j} u
\]
in $\mathcal T$, one has
\begin{equation}
\label{eq:dF_comp_paper}
dF(j\circ i)= (\Exp\circ F)i\bigl(dF(j)\bigr).
\end{equation}
\end{lemma}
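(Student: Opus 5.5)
The plan is to reduce the statement to the functoriality of the conditional expectation functor $\Exp:\Prob\to\Ban$ composed with the contravariant filtration $F$. Since $F(j\circ i)=F(i)\circ F(j)$ and $\Exp$ is covariant on $\Prob$, we obtain
\[
(\Exp\circ F)(j\circ i)=\Exp\bigl(F(i)\circ F(j)\bigr)=\Exp(F(i))\circ\Exp(F(j))=(\Exp\circ F)i\circ(\Exp\circ F)j .
\]
Applying this to the constant function $1_{F(u)}$ and using the definition (\ref{eq:dF_def_paper}) gives
\[
dF(j\circ i)=(\Exp\circ F)(j\circ i)(1_{F(u)})=(\Exp\circ F)i\bigl((\Exp\circ F)j(1_{F(u)})\bigr)=(\Exp\circ F)i\bigl(dF(j)\bigr),
\]
which is (\ref{eq:dF_comp_paper}).

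The only substantive point is the functoriality $\Exp(\psi\circ\varphi)=\Exp(\psi)\circ\Exp(\varphi)$ for $\Prob$-arrows $\varphi:X\to Y$ and $\psi:Y\to Z$. The paper cites \cite{AR_2019} for this, but I would verify it directly from the defining identity, so that the argument does not depend on the citation. Fix $f\in L^1(X)$ and $C\in\mathcal F_Z$. Applying the identity for $\psi$ to the function $\Exp(\varphi)(f)$, then the identity for $\varphi$ to $f$ on the set $\psi^{-1}(C)$, gives
\[
\int_C \Exp(\psi)\bigl(\Exp(\varphi)f\bigr)\,d\mu_Z=\int_{\psi^{-1}(C)}\Exp(\varphi)f\,d\mu_Y=\int_{\varphi^{-1}\psi^{-1}(C)}f\,d\mu_X=\int_{(\psi\circ\varphi)^{-1}(C)}f\,d\mu_X .
\]
The element $\Exp(\psi\circ\varphi)f$ satisfies the same identity for every $C\in\mathcal F_Z$.

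By Radon--Nikodym uniqueness, the two functions therefore agree $\mu_Z$-a.e. This step needs the null-preserving hypotheses, which guarantee that each $\Exp(\varphi)$ is a well-defined element of $L^1$. I expect this a.e.\ uniqueness bookkeeping to be the main (mild) obstacle; there is no real difficulty.

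Alternatively, the identification of $dF$ with Radon--Nikodym derivatives gives a direct measure-theoretic check. For $B\in\mathcal F_s$,
\[
\int_B(\Exp\circ F)i(dF(j))\,d\mu_s=\int_{F(i)^{-1}B}dF(j)\,d\mu_t=\mu_u\bigl(F(j)^{-1}F(i)^{-1}B\bigr)=\mu_u\circ F(j\circ i)^{-1}(B).
\]
This shows that $(\Exp\circ F)i(dF(j))$ is the density $d(\mu_u\circ F(j\circ i)^{-1})/d\mu_s$, which is $dF(j\circ i)$ by definition. Either route concludes the proof.
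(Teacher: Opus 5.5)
Your proposal is correct, and your main route packages the argument differently from the paper. You obtain the lemma as the functoriality identity $(\Exp\circ F)(j\circ i)=(\Exp\circ F)i\circ(\Exp\circ F)j$ evaluated at $1_{F(u)}$. This uses only the first form $dF(i)=(\Exp\circ F)i(1_{F(t)})$ of the definition, and you prove functoriality in general from the defining identity of $\Exp$ together with a.e.\ uniqueness. The paper instead uses the Radon--Nikodym form $dF(j)=d(\mu_u\circ F(j)^{-1})/d\mu_t$. It checks directly that $dF(j\circ i)$ and $(\Exp\circ F)i(dF(j))$ have the same integral over every $A\in\mathcal F_s$, which is exactly your ``alternative'' paragraph. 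Unwound, the two routes consist of the same two applications of the defining identity: the paper's computation is your functoriality check specialized to $f=1$. Your version makes the structural point that the lemma is an immediate corollary of the functoriality of $\Exp\circ F$, and it works with $1_{F(u)}$ replaced by any $f\in L^1(F(u))$. The paper relies on that functoriality anyway, for instance in $T^\sigma_{a,b}\circ T^\sigma_{b,c}=T^\sigma_{a,c}$. The paper's version is a single self-contained computation that does not need functoriality of $\Exp$ as a separately established (there, cited) fact. The price is that it invokes the identification of $dF$ with the density of a pushforward measure.
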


\begin{proof}
For every $A\in \mathcal F_s$,
\begin{align*}
\int_A dF(j\circ i)\, d\prob_s
&=
(\prob_u\circ (F(j\circ i))^{-1})(A) \\
&=
(\prob_u\circ (F(j))^{-1}\circ (F(i))^{-1})(A) \\
&=
\int_{(F(i))^{-1}(A)} d(\prob_u\circ (F(j))^{-1}) \\
&=
\int_{(F(i))^{-1}(A)} dF(j)\, d\prob_t .
\end{align*}
By the defining property of conditional expectation, this equals
\[
\int_A (\Exp\circ F)i(dF(j))\, d\prob_s,
\]
which proves the claim.
\end{proof}

\subsection{The first characterization of martingales}

Define
\[
C^1
:=
\prod_{i:s\to t \text{ in }\mathcal T} (\Exp\circ F)s
\]
and a map
\[
\bdelta^1 : C^0 \longrightarrow C^1
\]
by
\begin{equation}
\label{eq:delta1_mu_paper}
\bdelta^1(f)(i):=(\Exp\circ F)i(f_t)-f_s\cdot dF(i).
\end{equation}

Then the martingale condition is exactly the vanishing of \(\bdelta^1\).

\begin{proposition}
Let \(\mathbf{Mart}(F)\subset C^0\) denote the set of all \(F\)-martingales.
For \(f\in C^0\),
\[
f \text{ is an } F\text{-martingale}
\quad\Longleftrightarrow\quad
\bdelta^1(f)=0.
\]
Hence
\[
\mathbf{Mart}(F)=\ker \bdelta^1.
\]
\end{proposition}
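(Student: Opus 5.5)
The proposition is essentially a reformulation of the definitions, and I will prove it by unwinding them componentwise. First I will check that \(\bdelta^1\) is well defined. For an arrow \(i:s\to t\), the operator \((\Exp\circ F)i\) maps \((\Exp\circ F)t\) into \((\Exp\circ F)s=L^1(F(s))\), so \((\Exp\circ F)i(f_t)\) lies in the factor of \(C^1\) indexed by \(i\).

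The term \(f_s\cdot dF(i)\) needs more care, because it is a product of two \(L^1\) functions and such a product need not be integrable. I expect this to be the only genuine obstacle. My first approach is to interpret \(\bdelta^1\) as taking values in the space of \(\mu_s\)-a.e.\ classes of measurable functions, which contains \(C^1\). In that ambient space the equation \(\bdelta^1(f)(i)=0\) still makes sense. If the defining identity \eqref{eq:fMartingale_paper} holds, then \(f_s\cdot dF(i)\) equals an \(L^1\) function, so no information is lost. In the measure-preserving case \(dF(i)=1\) and the issue disappears.

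With this in place, the equivalence is immediate. The element \(\bdelta^1(f)\in C^1\) is zero if and only if each of its components vanishes, that is,
\[
(\Exp\circ F)i(f_t)-f_s\cdot dF(i)=0\quad\text{in }(\Exp\circ F)s
\]
for every arrow \(i:s\to t\). This is exactly condition \eqref{eq:fMartingale_paper} imposed for all arrows, which is the definition of an \(F\)-martingale. Both directions follow at once, and in particular the set identity \(\mathbf{Mart}(F)=\ker\bdelta^1\) holds.

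Finally, I will note that \(\bdelta^1\) is linear. Each \((\Exp\circ F)i\) is linear, and multiplication by the fixed function \(dF(i)\) is linear. Hence \(\mathbf{Mart}(F)\) is a linear subspace of \(C^0\). Lemma~\eqref{eq:dF_comp_paper} is not needed for this statement; it will matter only later, when checking cochain conditions.
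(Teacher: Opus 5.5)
Your proof is correct and takes the same route as the paper, which gives no separate argument and simply observes that the martingale condition \eqref{eq:fMartingale_paper}, imposed arrow by arrow, is exactly the vanishing of each component of $\bdelta^1(f)$. Your extra remark is a legitimate point the paper glosses over: $f_s\cdot dF(i)$ need not be integrable, so $\bdelta^1$ is best read as landing in a.e.\ classes of measurable functions. That reading leaves the equivalence intact.
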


\section{Local transport complexes}
\label{sec:localTransportComplexes}

We now explain how the first-order characterization above suggests a higher-order theory. A direct extension in the original $\mu$-gauge fails in general to define a cochain complex. This failure is repaired by a gauge change along simplices.

\subsection{The naive $\mu$-chain}

Let
$\Delta$
be the simplex category,
and
$
\delta^n_i : [n-1] \to [n]
$
and
$
\sigma^n_j : [n+1] \to [n]
$
are usual coface maps and codegeneracy maps in $\Delta$, respectively.

\begin{definition}
\label{defn:naiveMuChain}
Let
\[
\tau \in N_n(\mathcal T)
\]
be an $n$-simplex,
where 
$N(\mathcal T)$
is the nerve of $\mathcal T$,
and
$
N_n(\mathcal T) := N(\mathcal T)([n]).
$
Write
\[
\tau = 
\bigl(
\tau(0) \xrightarrow{\tau(*_{1})} \tau(1) \xrightarrow{\tau(*_{2})} \cdots \xrightarrow{\tau(*_{n})} \tau(n)
\bigr).
\]
\begin{enumerate}
  \item
\begin{equation}
\label{eq:Cmu_paper}
C^n
:=
\prod_{\tau \in N_n(\mathcal T)} (\Exp\circ F)(\tau(0)).
\end{equation}

\item
For \(c\in C^{n-1}\), 
\[
\bdelta^n : C^{n-1}\to C^n
\]
by
\begin{equation}
\label{eq:delta_mu_n_paper}
\bdelta^n(c)(\tau)
:=
\sum_{i=0}^n (-1)^i A_{i}^{\tau}\bigl(c(\tau\circ\delta_i^n)\bigr),
\end{equation}
where
\[
A_{i}^{\tau}
:
(\Exp\circ F)((\tau\circ\delta_i^n)(0))
\longrightarrow
(\Exp\circ F)(\tau(0))
\]
is given by
\begin{equation}
\label{eq:A_i_paper}
A_{i}^{\tau}(f)
:=
\begin{cases}
(\Exp\circ F)(\tau(*_{1}))(f) & \text{if } i=0,\\[4pt]
f\cdot dF(\tau(*_{1})) & \text{if } i=1,\\[4pt]
f & \text{if } i\ge 2.
\end{cases}
\end{equation}

\end{enumerate}
\end{definition}

This construction extends the operator \(\bdelta^1\), but in general it does not define a cochain complex.

Indeed, let
\[
\mathbf C:=\bdelta^{n+1}\circ \bdelta^n .
\]
Using the simplicial identity
\[
\delta_j^{n+1}\circ \delta_i^n = \delta_i^{n+1}\circ \delta_{j-1}^n
\qquad (0\le i<j\le n+1),
\]
one obtains
\[
\mathbf C(c)(\tau)
=
\sum_{0\le i<j\le n+1}
(-1)^{i+j}
D_{i,j}^{\tau}
\bigl(c(\tau\circ \delta_j^{n+1}\circ \delta_i^n)\bigr),
\]
where
\[
D_{i,j}^{\tau}
:=
\bigl(
A_{j}^{\tau}\circ A_{i}^{\tau\circ\delta_j^{n+1}}
\bigr)
-
\bigl(
A_{i}^{\tau}\circ A_{j-1}^{\tau\circ\delta_i^{n+1}}
\bigr).
\]

In particular, for \((i,j)=(0,1)\), one finds in general that
\[
D_{0,1}^{\tau}\neq 0.
\]
Hence
\[
\bdelta^{n+1}\circ \bdelta^n \neq 0
\]
in general, so the $\mu$-chain is not a cochain complex.

This failure motivates the introduction of a gauge-change method.

\subsection{$\mu$-gauge and $\sigma$-gauge}
\label{sec:muGaugeAndBetaGauge}

Let \(\sigma\in N_k(\mathcal T)\) be a fixed $k$-simplex:
\[
\sigma = (
	t_0
		\xrightarrow{i_1}
	t_1
		\xrightarrow{i_2}
	\cdots
		\xrightarrow{i_k}
	t_k
),
\]
and for \(\ell\in [k]\) write
\[
F_\ell
	:=
F(t_{\ell})
	=
(\Omega_{t_\ell},\mathcal F_{t_\ell},\prob_{t_\ell}),
	\qquad
\phi_\ell
	:=
F(i_{\ell})
	:
F_\ell \to F_{\ell-1}
	\quad
(1 \le \ell\le k).
\]

Thus the $k$-simplex \(\sigma\) determines a diagram in \(\Prob\),
\[
F_0 \xleftarrow{\phi_1} F_1 \xleftarrow{\phi_2}\cdots \xleftarrow{\phi_k} F_k,
\]
which we call the \emph{$\mu$-gauge}.

Now define recursively a modified family of measures
$\mu^{\sigma}_{t_{\ell}}$:
\begin{equation}
\label{eq:muSigma}
\mu^\sigma_{t_k}
	:=
\prob_{t_k},
\qquad
\mu^\sigma_{t_\ell}
	:=
\mu^\sigma_{t_{\ell+1}}
	\circ
F(i_{\ell+1})^{-1}
	\quad
(0 \le \ell < k).
\end{equation}
Then, we have
\[
\mu^{\sigma}_{t_{\ell}}
	\ll
\prob_{t_{\ell}}
	\quad
(0 \le \ell \le k).
\]

Using these measures, define a functor
\[
F^{\sigma} : \sigma^{op}\to \mpProb 
\]
by
\[
F^{\sigma}(t_{\ell})
=
F^{\sigma}_\ell
:=
(\Omega_{t_\ell},\mathcal F_{t_\ell},\mu^\sigma_{t_\ell}),
\qquad
F^{\sigma}(i_{\ell})
=
\phi^\sigma_\ell:=\phi_\ell,
\]
where
$\mpProb$
is the subcategory of $\Prob$
whose arrows are restricted to measure-preserving ones.

\medskip
\noindent
We call the resulting diagram
\[
F^{\sigma}_0
	\xleftarrow{\phi^\sigma_1}
F^{\sigma}_1
	\xleftarrow{\phi^\sigma_2}
\cdots
	\xleftarrow{\phi^\sigma_k}
F^{\sigma}_k
\]
the \emph{$\sigma$-gauge}.

\medskip
\noindent
Although the conditional expectation operator $\mathcal E$ is functorial
on $\Prob$
(see Section \ref{sec:condExpAndDensityOp}),
the presence of the density operator $dF$
introduces a multiplicative distortion along composable arrows. As a
result, the natural coboundary operators defined from $F$ do not, in
general, satisfy the cochain condition.

The construction of $F^\sigma$ provides a normalization along each
$k$-simplex $\sigma$, in which the induced maps become measure-preserving,
so that
\[
dF^\sigma(i_{\ell}) = 1.
\]
This removes the density distortion at the level of each simplex,
without altering the underlying probabilistic structure.

As we shall see, this normalization is precisely what allows the
resulting coboundary operators to satisfy
\[
\delta_{\sigma}^{n+1} \circ \delta_{\sigma}^n = 0,
\]
thereby giving rise to a well-defined cochain complex.

\subsection{The cochain complex with $\sigma$-gauge}

The normalization introduced in 
Section \ref{sec:muGaugeAndBetaGauge}
ensures that the density
terms vanish along each simplex, so that the coboundary operators
define a genuine cochain complex.

\begin{definition}
\label{defn:CnSigmaDeltaNsigma}
Let $n \ge 0$,
$
\theta \in \Delta([n], [k]).
$
and
$0 \le a \le b \le k$.
\begin{enumerate}
  \item
\begin{equation}
\label{eq:CnSigma}
C^n_{\sigma}
:=
\prod_{
	\theta \in \Delta([n], [k])
}
	L^1(F^{\sigma}_{\theta(0)}).
\end{equation}

\item
When $a < b$,
\begin{equation}
\label{eq:sigmaAB}
\langle \sigma \rangle_{a}^b
	:=
i_b \circ i_{b-1} \circ \cdots \circ i_{a+1} .
\end{equation}

\item
\begin{equation}
\label{eq:TsigmaAB}
T^\sigma_{a,b}
	:=
\begin{cases}
	(\mathcal E\circ F^\sigma)
	\bigl(\langle\sigma\rangle_a^b\bigr)
		:
	L^1(F^\sigma_b)\to L^1(F^\sigma_a)
			\quad
&\textrm{if } a < b,
		\\
\Id_{L^1(F^{\sigma}_a)}
			\quad
&\textrm{if } a = b.
\end{cases}
\end{equation}

\item
For 
$0 \le \ell \le n$,
\begin{equation}
\label{eq:AsigmaThetaEll}
A^{\sigma,\theta}_{\ell}
	:=
\begin{cases}
T^{\sigma}_{\theta(0), \theta(1)}
		& \text{if } \ell = 0,
			\\
\Id_{L^1\big(F^{\sigma}_{\theta(0)}\big)}
		& \text{if } \ell > 0.
\end{cases}
\end{equation}

\item
For \(c\in C_{\sigma}^{n-1}\),
\begin{equation}
\label{eq:delta_sigma_paper}
\delta_{\sigma}^n(c)(\theta)
	:=
\sum_{\ell=0}^n
	(-1)^\ell
	A^{\sigma,\theta}_{\ell}
	\bigl(c(\theta \circ \delta^n_{\ell})\bigr),
\end{equation}

\end{enumerate}
\end{definition}

Since \(\mathcal E\circ F^\sigma\) is functorial, we have
\begin{equation}
\label{eq:T_sigma_functoriality}
T^\sigma_{a,b}\circ T^\sigma_{b,c}
=
T^\sigma_{a,c}
\qquad
(0\le a\le b\le c\le k).
\end{equation}

Then, we have the following proposition.

\begin{proposition}
The operators $\delta^n_\sigma$ satisfy
\[
\delta^{n+1}_\sigma \circ \delta^n_\sigma = 0 .
\]
\end{proposition}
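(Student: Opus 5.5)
The plan is to expand $\delta^{n+1}_\sigma(\delta^n_\sigma(c))(\theta)$ for $\theta\in\Delta([n+1],[k])$ and $c\in C^{n-1}_\sigma$ into a double sum, as was done for the naive $\mu$-chain:
\[
\delta^{n+1}_\sigma\delta^n_\sigma(c)(\theta)
=\sum_{j=0}^{n+1}\sum_{i=0}^{n}(-1)^{i+j}
A^{\sigma,\theta}_j\,A^{\sigma,\theta\circ\delta^{n+1}_j}_i\bigl(c(\theta\circ\delta^{n+1}_j\circ\delta^n_i)\bigr).
\]
Using the simplicial identity $\delta^{n+1}_j\circ\delta^n_i=\delta^{n+1}_i\circ\delta^n_{j-1}$ for $i<j$, I will pair the term $(i,j)$ with $i<j$ against the term $(j-1,i)$. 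The two terms have the same argument of $c$ and opposite signs. So everything reduces to showing that the corresponding operator difference
\[
D^{\theta}_{i,j}:=A^{\sigma,\theta}_j\circ A^{\sigma,\theta\circ\delta^{n+1}_j}_i-A^{\sigma,\theta}_i\circ A^{\sigma,\theta\circ\delta^{n+1}_i}_{j-1}
\]
vanishes for all $0\le i<j\le n+1$. This is exactly the step where the naive $\mu$-chain failed. Here it is rescued by the functoriality \eqref{eq:T_sigma_functoriality} of $T^\sigma$ inside the $\sigma$-gauge.

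I will check $D^\theta_{i,j}=0$ by cases, using that $A_\ell$ is the identity unless $\ell=0$. Throughout, note that $(\theta\circ\delta_j)(0)=\theta(0)$ for $j\ge1$, and $(\theta\circ\delta_0)(0)=\theta(1)$.

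\textbf{Case $i\ge1$.} All four factors are identities, so $D^\theta_{i,j}=0$.

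\textbf{Case $i=0$, $j\ge2$.} The first composite is $A^{\sigma,\theta}_j\circ A^{\sigma,\theta\circ\delta_j}_0=T^\sigma_{\theta(0),\theta(1)}$, because $(\theta\circ\delta_j)(1)=\theta(1)$ when $j\ge2$. The second composite is $A^{\sigma,\theta}_0\circ A^{\sigma,\theta\circ\delta_0}_{j-1}=T^\sigma_{\theta(0),\theta(1)}\circ\Id$, since $j-1\ge1$. The two agree.

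\textbf{Case $(i,j)=(0,1)$.} This is the only genuine case. The first composite is $A^{\sigma,\theta}_1\circ A^{\sigma,\theta\circ\delta_1}_0=T^\sigma_{\theta(0),\theta(2)}$, since $\theta\circ\delta_1$ has vertices $\theta(0),\theta(2),\dots$. The second composite is $A^{\sigma,\theta}_0\circ A^{\sigma,\theta\circ\delta_0}_0=T^\sigma_{\theta(0),\theta(1)}\circ T^\sigma_{\theta(1),\theta(2)}$. By \eqref{eq:T_sigma_functoriality} this equals $T^\sigma_{\theta(0),\theta(2)}$, because $\theta$ is monotone, so $\theta(0)\le\theta(1)\le\theta(2)$. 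Repeated vertices are handled by the convention $T^\sigma_{a,a}=\Id$.

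This last case is the main obstacle. It is exactly where the density factor $dF$ broke the naive complex. Here it works because in the $\sigma$-gauge there is no multiplicative density term: $dF^\sigma=1$, and $\mathcal E\circ F^\sigma$ is a genuine functor on the chain $\sigma$. I would also verify carefully that $T^\sigma$ is well defined between the $L^1(F^\sigma_\ell)$ spaces, using measure preservation of $\phi^\sigma_\ell$. Finally, I would check the small boundary case $n=0$ (where $C^{-1}_\sigma$ should be taken to be $0$) separately or by convention.
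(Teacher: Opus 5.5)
Your proposal is correct and follows the paper's proof essentially line for line. You use the same double-sum expansion, the same cancellation via $\delta^{n+1}_j\circ\delta^n_i=\delta^{n+1}_i\circ\delta^n_{j-1}$, and the same three cases ($i\ge 1$; $i=0,\ j\ge 2$; $(i,j)=(0,1)$), with the functoriality $T^\sigma_{a,b}\circ T^\sigma_{b,c}=T^\sigma_{a,c}$ carrying the only nontrivial case, while your remarks on non-strictly monotone $\theta$ (handled by $T^\sigma_{a,a}=\Id$) and on the convention $C^{-1}_\sigma=0$ make explicit points the paper leaves implicit.
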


\begin{proof}
Let \(c\in C^{n-1}_\sigma\) and let
$
\theta\in \Delta([n+1],[k]).
$
By definition,
\[
(\delta^{n+1}_\sigma \circ \delta^n_\sigma)(c)(\theta)
=
\sum_{j=0}^{n+1}\sum_{i=0}^{n}
(-1)^{i+j}
\big(
A_j^{\sigma,\theta}
	\circ
A_i^{\sigma,\theta\circ\delta^{n+1}_j}
\big)
\bigl(
c(\theta\circ\delta^{n+1}_j\circ\delta^n_i)
\bigr).
\]

Using the simplicial identity
\[
\delta^{n+1}_j\circ\delta^n_i
=
\delta^{n+1}_i\circ\delta^n_{j-1}
\qquad (0\le i<j\le n+1),
\]
it is enough to prove that
\begin{equation}
\label{eq:A_sigma_pair_identity}
A_j^{\sigma,\theta}
	\circ
A_i^{\sigma,\theta\circ\delta^{n+1}_j}
=
A_i^{\sigma,\theta}
	\circ
A_{j-1}^{\sigma,\theta\circ\delta^{n+1}_i}
\end{equation}
for every \(0\le i<j\le n+1\).

We verify this by cases.

First suppose \(i>0\). Then \(j>0\) and \(j-1>0\). Hence all operators in
\eqref{eq:A_sigma_pair_identity} are identities, and the equality is immediate.

Next suppose \(i=0\) and \(j>1\). Since deleting the \(j\)-th vertex with
\(j>1\) does not affect the first two vertices, we have
\[
(\theta\circ\delta^{n+1}_j)(0)=\theta(0),
\qquad
(\theta\circ\delta^{n+1}_j)(1)=\theta(1).
\]
Therefore
\[
A_j^{\sigma,\theta}
	\circ
A_0^{\sigma,\theta\circ\delta^{n+1}_j}
=
\Id_{L^1\big(F^{\sigma}_{\theta(0)}\big)}
	\circ
T^\sigma_{\theta(0),\theta(1)}
=
T^\sigma_{\theta(0),\theta(1)}.
\]
On the other hand,
\[
A_0^{\sigma,\theta}
	\circ
A_{j-1}^{\sigma,\theta\circ\delta^{n+1}_0}
=
T^\sigma_{\theta(0),\theta(1)}
	\circ
\Id_{L^1\big(F^{\sigma}_{\big(\theta \circ \delta_0^{n+1}\big)(0)}\big)}
	=
T^\sigma_{\theta(0),\theta(1)}.
\]
Thus \eqref{eq:A_sigma_pair_identity} holds in this case.

Finally consider the case \((i,j)=(0,1)\). The left-hand side is
\[
A_1^{\sigma,\theta}
	\circ
A_0^{\sigma,\theta\circ\delta^{n+1}_1}
	=
\Id_{L_1\big(F^{\sigma}_{\theta(0)}\big)}
	\circ
T^\sigma_{\theta(0),\theta(2)}
	=
T^\sigma_{\theta(0),\theta(2)}.
\]
The right-hand side is
\[
A_0^{\sigma,\theta}
	\circ
A_0^{\sigma,\theta\circ\delta^{n+1}_0}
=
T^\sigma_{\theta(0),\theta(1)}
\circ
T^\sigma_{\theta(1),\theta(2)}.
\]
By \eqref{eq:T_sigma_functoriality}, this equals
\[
T^\sigma_{\theta(0),\theta(2)}.
\]
Hence \eqref{eq:A_sigma_pair_identity} also holds in the case
\((i,j)=(0,1)\).

Therefore every pair of terms in the double sum cancels with the
corresponding term of opposite sign. Consequently,
\[
\delta^{n+1}_\sigma\circ\delta^n_\sigma=0.
\]
\end{proof}

Hence we obtain a cochain complex
\[
0
	\to
C_{\sigma}^0
	\xrightarrow{\delta_{\sigma}^1}
C_{\sigma}^1
	\xrightarrow{\delta_{\sigma}^2}
C_{\sigma}^2
	\xrightarrow{\delta_{\sigma}^3}
\cdots.
\]

\section{Transport cohomology and loop effects}
\label{sec:homologicalArb}

In this section, we investigate the cohomological structures associated
with simplex-local transport complexes and study the global effects
generated by transport along loops.

Given a parametrized simplex
\(
\sigma\in N_k(\mathcal T)
\),
the associated \(\sigma\)-gauge complex encodes transport data along the
simplicial history represented by \(\sigma\). The corresponding
cohomology measures compatibility and obstruction phenomena arising from
probabilistic transport.

A central feature of the present framework is the appearance of loop
effects. Even when a loop begins and ends at the same object of
\(
\mathcal T
\),
transport around the loop may produce nontrivial probabilistic
distortion. The resulting holonomy operators describe accumulated
transport effects between probabilistic states and detect global
obstructions generated by closed transport histories.

From this viewpoint, homological arbitrage emerges as a global transport
phenomenon arising from probabilistic distortion along loops, rather
than from local inconsistencies at individual transitions.

\medskip
\noindent
Let
\[
\sigma
=
\left(
t_0
    \xrightarrow{i_1}
t_1
    \xrightarrow{i_2}
\cdots
    \xrightarrow{i_k}
t_k
\right)
\in N_k(\mathcal T)
\]
be a fixed simplex throughout this section.

\subsection{Transport cocycles and their consistency}

Recall that, in the \(\sigma\)-gauge, we constructed a cochain complex
\[
\xymatrix@C=24pt{
C^0_\sigma
    \ar[r]^-{\delta^1_\sigma}
&
C^1_\sigma
    \ar[r]^-{\delta^2_\sigma}
&
C^2_\sigma
    \ar[r]
&
\cdots
}
\]
satisfying
\[
\delta^{n+1}_\sigma
    \circ
\delta^n_\sigma
=
0 .
\]

\begin{definition}
For \(n\ge 0\), define
the space of
\emph{\(n\)-transport cocycles}
by
\[
Z^n_\sigma
:=
\ker \delta^{n+1}_\sigma ,
\]
and
the space of
\emph{\(n\)-transport coboundaries}
by
\[
B^n_\sigma
:=
\operatorname{im}
\delta^n_\sigma .
\]
The quotient
\[
H^n_\sigma
:=
Z^n_\sigma
/
B^n_\sigma
\]
is called the
\emph{transport cohomology}
of the simplex \(\sigma\).
\end{definition}

The cohomology groups
\(
H^\bullet_\sigma
\)
measure the extent to which transport structures along the simplex
\(\sigma\)
fail to be globally trivialized.

\vspace{0.5\baselineskip}

In particular,
elements of
\(
Z^1_\sigma
\)
describe transport systems which are compatible with simplicial
composition.

\begin{proposition}
\label{prop:transportCocycle}
Let
$
a
    \in
Z^1_\sigma
$
and
$
\theta
:
[2]
    \to
[k]
$
be an element of
\(
\Delta([2],[k])
\).
Then,
\begin{equation}
\label{eq:transportCocycle}
a(\theta\circ\delta^2_1)
=
T^\sigma_{r_0,r_1}
\bigl(
a(\theta\circ\delta^2_0)
\bigr)
+
a(\theta\circ\delta^2_2).
\end{equation}
\end{proposition}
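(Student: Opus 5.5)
Here $r_\ell:=\theta(\ell)$ for $\ell=0,1,2$, so $r_0\le r_1\le r_2$ in $[k]$; this notation is implicit in the statement. The plan is to unwind the definition $Z^1_\sigma=\ker\delta^2_\sigma$ and evaluate $\delta^2_\sigma(a)$ at the given $\theta\in\Delta([2],[k])$. Since $a\in C^1_\sigma$, the defining formula \eqref{eq:delta_sigma_paper} with $n=2$ reads
\[
\delta^2_\sigma(a)(\theta)
=
A^{\sigma,\theta}_0\bigl(a(\theta\circ\delta^2_0)\bigr)
-
A^{\sigma,\theta}_1\bigl(a(\theta\circ\delta^2_1)\bigr)
+
A^{\sigma,\theta}_2\bigl(a(\theta\circ\delta^2_2)\bigr).
\]

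Next, identify the operators $A^{\sigma,\theta}_\ell$ using \eqref{eq:AsigmaThetaEll}: $A^{\sigma,\theta}_0=T^\sigma_{\theta(0),\theta(1)}=T^\sigma_{r_0,r_1}$, while $A^{\sigma,\theta}_1$ and $A^{\sigma,\theta}_2$ are the identity on $L^1(F^\sigma_{r_0})$.

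Before writing the equation we must check that every term lies in $L^1(F^\sigma_{r_0})$.
\begin{itemize}
\item Since $(\theta\circ\delta^2_1)(0)=\theta(0)=r_0$ and $(\theta\circ\delta^2_2)(0)=r_0$, the components $a(\theta\circ\delta^2_1)$ and $a(\theta\circ\delta^2_2)$ already belong to $L^1(F^\sigma_{r_0})$.
\item Since $(\theta\circ\delta^2_0)(0)=\theta(1)=r_1$, the component $a(\theta\circ\delta^2_0)$ lies in $L^1(F^\sigma_{r_1})$. Then $T^\sigma_{r_0,r_1}$ maps it to $L^1(F^\sigma_{r_0})$, as required by \eqref{eq:TsigmaAB}.
\end{itemize}

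Finally, the cocycle condition $\delta^2_\sigma(a)=0$ evaluated at $\theta$ gives
\[
T^\sigma_{r_0,r_1}\bigl(a(\theta\circ\delta^2_0)\bigr)-a(\theta\circ\delta^2_1)+a(\theta\circ\delta^2_2)=0 .
\]
Rearranging yields \eqref{eq:transportCocycle}.

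There is no genuine obstacle. The only points needing care are the bookkeeping of which source space each component lives in, and the degenerate cases $r_0=r_1$ or $r_1=r_2$. When $r_0=r_1$, $T^\sigma_{r_0,r_1}$ is the identity by \eqref{eq:TsigmaAB}, so the formula still holds verbatim. The case $r_1=r_2$ needs no separate treatment.
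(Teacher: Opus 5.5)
Your proposal is correct and follows the paper's route. You expand $\delta^2_\sigma(a)(\theta)$ from the definition, identify $A^{\sigma,\theta}_0=T^\sigma_{r_0,r_1}$ and $A^{\sigma,\theta}_1=A^{\sigma,\theta}_2=\mathrm{id}$, and rearrange $\delta^2_\sigma(a)(\theta)=0$; your explicit checks of the source spaces and of the degenerate cases $r_0=r_1$, $r_1=r_2$ are a harmless addition that the paper leaves implicit.
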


\begin{proof}
Write
\[
\theta(0)
    =
r_0,
\qquad
\theta(1)
    =
r_1,
\qquad
\theta(2)
    =
r_2 .
\]

Then,
by definition,
\[
(\delta^2_\sigma a)(\theta)
=
A^{\sigma,\theta}_0
\bigl(
a(\theta\circ\delta^2_0)
\bigr)
-
A^{\sigma,\theta}_1
\bigl(
a(\theta\circ\delta^2_1)
\bigr)
+
A^{\sigma,\theta}_2
\bigl(
a(\theta\circ\delta^2_2)
\bigr).
\]

Since
\[
A^{\sigma,\theta}_0
=
T^\sigma_{r_0,r_1},
\qquad
A^{\sigma,\theta}_1
=
A^{\sigma,\theta}_2
=
\mathrm{id},
\]
we obtain
\[
(\delta^2_\sigma a)(\theta)
=
T^\sigma_{r_0,r_1}
\bigl(
a(\theta\circ\delta^2_0)
\bigr)
-
a(\theta\circ\delta^2_1)
+
a(\theta\circ\delta^2_2).
\]

On the other hand,
since
$
a
    \in
Z^1_\sigma ,
$
we have
\[
a(\theta\circ\delta^2_1)
=
T^\sigma_{r_0,r_1}
\bigl(
a(\theta\circ\delta^2_0)
\bigr)
+
a(\theta\circ\delta^2_2).
\]
\end{proof}

Proposition \ref{prop:transportCocycle}
means that the transport contribution associated with the composite
path from \(r_0\) to \(r_2\)
decomposes consistently into the transport from \(r_0\) to \(r_1\)
together with the transport from \(r_1\) to \(r_2\).

Thus,
\(1\)-transport cocycles represent transport systems compatible with
subdivision of paths inside the simplex \(\sigma\).


\medskip
\noindent
The cocycle condition introduced above admits a natural interpretation
as a compatibility condition for transports along subdivided paths.

Proposition \ref{prop:transportCocycle}
says that
the transport associated with the composite interval
\(
[r_0,r_2]
\)
decomposes into the transport from
\(
r_0
\)
to
\(
r_1
\)
and the transport from
\(
r_1
\)
to
\(
r_2
\).

Write
\begin{equation}
\label{eq:blacketK}
[k] = \big(
0 
	\xrightarrow{*_1}
1
	\xrightarrow{*_2}
\cdots
	\xrightarrow{*_k}
k
\big) .
\end{equation}
Then,
for 
$
\ell \in [k],
$
\[
t_{\ell} = \sigma(\ell),
	\qquad
i_{\ell} = \sigma(*_{\ell})
	\quad (\ell > 0) .
\]

\medskip
\noindent
Let
\begin{equation}
\label{eq:compStar}
*_{a,b}
	:=
*_b \circ *_{b-1} \circ \cdots \circ *_{a+1}
\quad
(0 \le a < b \le k) .
\end{equation}

\begin{proposition}
\label{prop:transportConsistency}
Let
$
a \in C_{\sigma}^1
$
and
$
\theta
	:
[m] \to [k]
$
be a strictly increasing function.
Then
\[
a\big(\theta(*_{0,m})\big)
=
\sum_{\ell=0}^{m-1}
T^\sigma_{
	\theta(0), \theta(\ell)
}
\Bigl(
a\big(
\theta(*_{\ell,\ell+1})
\big)
\Bigr).
\]

\end{proposition}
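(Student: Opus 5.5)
The plan is to argue by induction on $m\ge 1$, peeling off the last edge of the chain $\theta(0)\to\theta(1)\to\cdots\to\theta(m)$ and using the cocycle identity of Proposition~\ref{prop:transportCocycle} at each step. First I fix the reading of the notation. For $0\le a<b\le m$, $\theta(*_{a,b})$ is the $1$-simplex of $\sigma$ from $\theta(a)$ to $\theta(b)$, i.e.\ the element of $\Delta([1],[k])$ sending $0\mapsto\theta(a)$ and $1\mapsto\theta(b)$. Under this reading $a(\theta(*_{a,b}))\in L^1(F^\sigma_{\theta(a)})$, so each $T^\sigma_{\theta(0),\theta(\ell)}$ is applied to an element of the correct space.

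The cocycle hypothesis is clearly needed. For $a\in C^1_\sigma$ arbitrary, the values on the long edge and on the short edges are independent, and the identity fails. I will therefore prove the statement for $a\in Z^1_\sigma$, which is evidently the intended hypothesis; this is the only real point to flag.

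The base case $m=1$ is immediate: the sum has the single term $\ell=0$, and $T^\sigma_{\theta(0),\theta(0)}=\Id$ by definition \eqref{eq:TsigmaAB}.

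For the inductive step with $m\ge 2$, I apply Proposition~\ref{prop:transportCocycle} to the $2$-simplex $\theta'\in\Delta([2],[k])$ given by $\theta'(0)=\theta(0)$, $\theta'(1)=\theta(m-1)$, $\theta'(2)=\theta(m)$. This is monotone because $\theta$ is strictly increasing. Its faces are
\[
\theta'\circ\delta^2_1=\theta(*_{0,m}),\qquad
\theta'\circ\delta^2_0=\theta(*_{m-1,m}),\qquad
\theta'\circ\delta^2_2=\theta(*_{0,m-1}).
\]
Equation \eqref{eq:transportCocycle} therefore gives
\[
a\bigl(\theta(*_{0,m})\bigr)
=
T^\sigma_{\theta(0),\theta(m-1)}\Bigl(a\bigl(\theta(*_{m-1,m})\bigr)\Bigr)
+
a\bigl(\theta(*_{0,m-1})\bigr).
\]
I then apply the induction hypothesis to the restriction $\theta|_{[m-1]}$, which is still strictly increasing. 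This expands $a(\theta(*_{0,m-1}))$ as $\sum_{\ell=0}^{m-2}T^\sigma_{\theta(0),\theta(\ell)}\bigl(a(\theta(*_{\ell,\ell+1}))\bigr)$. The first term above is exactly the missing $\ell=m-1$ summand, which completes the induction.

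No use of the functoriality \eqref{eq:T_sigma_functoriality} is needed, because the transports in the sum are always based at $\theta(0)$. The choice to split off the last edge rather than the first is deliberate: it makes the transport operator in the cocycle identity land directly as $T^\sigma_{\theta(0),\theta(m-1)}$. Splitting off the first edge would instead produce compositions $T^\sigma_{\theta(0),\theta(1)}\circ T^\sigma_{\theta(1),\theta(\ell)}$, which would then have to be collapsed via \eqref{eq:T_sigma_functoriality}.
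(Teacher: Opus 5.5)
Your proof is correct and takes the paper's route: the paper's proof is only the remark that the identity follows by iterating Proposition~\ref{prop:transportCocycle}, and your induction splitting off the last edge (so the transport appears directly as $T^\sigma_{\theta(0),\theta(m-1)}$, with no appeal to \eqref{eq:T_sigma_functoriality}) carries out exactly that iteration. You are also right that the hypothesis must be $a\in Z^1_\sigma$ rather than $a\in C^1_\sigma$: the identity fails for general cochains, and the paper's own argument silently relies on the cocycle condition when it invokes Proposition~\ref{prop:transportCocycle}.
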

\begin{proof}
It is a straightforward result of the iterating use of
Proposition \ref{prop:transportCocycle}.
\end{proof}

Therefore,
a \(1\)-transport cocycle determines a transport system whose total
effect along a path depends only on the resulting composite transport
and is compatible with subdivision.

\vspace{0.5\baselineskip}

This behavior is analogous to parallel transport in differential
geometry:
local transport data combine coherently along concatenations of paths.

\begin{remark}
The transport cocycles considered here are local objects attached to a
fixed simplex \(\sigma\).
Consequently,
their consistency is formulated relative to the parametrized transport
structure of \(\sigma\),
rather than globally over the entire category \(\mathcal T\).
\end{remark}

\subsection{Loops and repeated vertices}

An important feature of the present construction is that the cochain
complex
\(
C^\bullet_\sigma
\)
is attached to a fixed parametrized simplex
\(
\sigma
\),
rather than merely to the image of the simplex inside
\(
\mathcal T
\).

Consequently,
different occurrences of the same object of
\(
\mathcal T
\)
along the simplex are treated as distinct positions.

\vspace{0.5\baselineskip}

More precisely,
let
\[
\sigma
=
\left(
t_0
    \xrightarrow{i_1}
t_1
    \xrightarrow{i_2}
\cdots
    \xrightarrow{i_k}
t_k
\right)
\in N_k(\mathcal T).
\]
Even if
\[
t_r=t_s
\qquad
(r\neq s),
\]
the vertices
\(r\)
and
\(s\)
are regarded as distinct occurrences inside the parametrized simplex
\(\sigma\).

Accordingly,
the coefficient spaces
\[
L^1(F^\sigma_r)
\qquad\text{and}\qquad
L^1(F^\sigma_s)
\]
are treated as distinct spaces,
even though they are associated with the same object of
\(
\mathcal T
\).

\vspace{0.5\baselineskip}

This distinction is particularly important in the case of loops.

\begin{definition}
A simplex
\[
\sigma
=
\left(
t_0
    \xrightarrow{i_1}
t_1
    \xrightarrow{i_2}
\cdots
    \xrightarrow{i_k}
t_k
\right)
\in N_k(\mathcal T)
\]
is called a
\emph{loop}
if
$
t_0=t_k .
$
\end{definition}


\vspace{0.5\baselineskip}

The distinction mentioned above reflects the path-dependent nature of transport.
Indeed,
the transport from the initial occurrence of
\(t_0\)
to its terminal occurrence along the loop
\(\sigma\)
may accumulate nontrivial effects,
even though the underlying object of
\(
\mathcal T
\)
returns to itself.

Thus,
the distinction between repeated occurrences of the same object is not
a defect of the formalism,
but rather an essential feature required to describe loop transport and
holonomy phenomena.

\vspace{0.5\baselineskip}

From this perspective,
the simplex
\(\sigma\)
should be understood not merely as a subset of objects of
\(
\mathcal T
\),
but as an ordered transport history.
The cochain complex
\(
C^\bullet_\sigma
\)
therefore records transport data along this history,
including possible nontrivial effects generated by loops.

\medskip
\noindent
The possibility of nontrivial transport effects along loops suggests a
cohomological mechanism for path-dependent inconsistencies.


\subsection{Holonomy}

We now consider transport around loops and the resulting return effects.

Let
\[
\sigma
=
\left(
t_0
    \xrightarrow{i_1}
t_1
    \xrightarrow{i_2}
\cdots
    \xrightarrow{i_k}
t_k
\right)
\in N_k(\mathcal T)
\]
be a loop, namely,
$
t_0=t_k .
$

The transport along the entire loop defines an operator
\[
T^\sigma_{0,k}
:
L^1(F^\sigma_k)
    \to
L^1(F^\sigma_0).
\]

Although
\(
t_0=t_k
\),
the corresponding probabilistic structures
\[
F^\sigma_0
=
(\Omega_{t_0},\mathcal F_{t_0},\mu^\sigma_0)
\]
and
\[
F^\sigma_k
=
(\Omega_{t_k},\mathcal F_{t_k},\mu^\sigma_k)
\]
need not coincide,
since the transported measures
\(
\mu^\sigma_0
\)
and
\(
\mu^\sigma_k
\)
may differ.

Thus,
the transport around a loop should be understood as a transport between
distinct probabilistic states on the same underlying measurable space.

\begin{definition}
The operator
\begin{equation}
\label{eq:holonomy}
\Hol(\sigma)
:=
T^\sigma_{0,k}
\end{equation}
is called the
\emph{holonomy}
associated with the loop
\(
\sigma
\).
\end{definition}

Hence,
holonomy measures the total transport effect accumulated along the loop
\(
\sigma
\).


\vspace{0.5\baselineskip}

In ordinary differential geometry,
parallel transport around a closed curve may fail to return a vector to
its original position.
The resulting discrepancy measures the curvature accumulated along the
loop.

Similarly,
in the present setting,
the filtration structure induces transport operators between local
probabilistic states,
and composition of these transports along a loop may generate a
nontrivial return effect.

\vspace{0.5\baselineskip}


\begin{definition}
A loop
\(
\sigma
\)
is said to have
\emph{probabilistically trivial loop}
if the measures
$
\mu^\sigma_0
$
and
$
\mu^\sigma_k
$
are equivalent.
Otherwise,
\(
\sigma
\)
is said to have
\emph{nontrivial holonomy}.
\end{definition}

Thus,
nontrivial holonomy indicates that transport around the loop changes the
underlying probabilistic structure itself.

\vspace{0.5\baselineskip}

When
\[
\mu^\sigma_0
\sim
\mu^\sigma_k,
\]
the Radon--Nikodym derivative
\[
\frac{d\mu^\sigma_0}{d\mu^\sigma_k}
\]
measures the distortion generated by transport around the loop.
Consequently,
holonomy naturally encodes accumulated probabilistic distortion along
closed transport histories.

\vspace{0.5\baselineskip}

Moreover,
the appearance of holonomy is inherently path-dependent.
Different loops based at the same object of
\(
\mathcal T
\)
may induce different return transports and different terminal
probabilistic structures.

Therefore,
the holonomy operator reflects global information encoded in the
transport history of the simplex,
rather than merely local transition data.

\subsection{Homological arbitrage}

We now discuss the financial interpretation of the transport structures
developed above.

The transport cohomology
\(
H^\bullet_\sigma
\)
measures transport consistency and
obstruction phenomena arising from probabilistic transport
along a fixed
simplex
\(
\sigma
\).
In particular,
nontrivial loop transport may generate global effects which are not
visible from local transition data alone.

\vspace{0.5\baselineskip}

Let
\[
\sigma
=
\left(
t_0
    \xrightarrow{i_1}
t_1
    \xrightarrow{i_2}
\cdots
    \xrightarrow{i_k}
t_k
\right)
\in N_k(\mathcal T)
\]
be a loop,
namely,
$
t_0=t_k .
$

The associated holonomy operator
\[
\Hol(\sigma)
=
T^\sigma_{0,k}
:
L^1(F^\sigma_k)
    \to
L^1(F^\sigma_0)
\]
describes the total transport accumulated along the loop.

As discussed above,
even though the initial and terminal vertices coincide,
the transported probabilistic structures
\[
F^\sigma_0
=
(\Omega_{t_0},\mathcal F_{t_0},\mu^\sigma_0)
\]
and
\[
F^\sigma_k
=
(\Omega_{t_k},\mathcal F_{t_k},\mu^\sigma_k)
\]
need not agree.

Consequently,
transport around the loop may generate a nontrivial distortion of the
underlying probabilistic structure.

\vspace{0.5\baselineskip}

This phenomenon suggests a mechanism by which global inconsistencies may
emerge from transport along closed paths,
even when each local transition individually appears compatible, or has transport consistency.

\begin{definition}
A loop
\(
\sigma
\)
is said to exhibit
\emph{homological arbitrage}
if the transport around the loop generates 
a nontrivial holonomy.
\end{definition}

In the present framework,
such effects are detected through the transport cohomology and the
associated holonomy operator.

\vspace{0.5\baselineskip}

The term ``homological'' reflects the fact that the phenomenon is not
localized at a single transition,
but instead emerges globally from the transport structure of the loop.
Thus,
homological arbitrage should be understood as a path-dependent
obstruction generated by transport around closed simplicial histories.

\vspace{0.5\baselineskip}

This viewpoint differs from classical arbitrage theory,
where inconsistencies are typically formulated through direct price
comparisons at fixed times.
In contrast,
the present framework emphasizes global transport effects accumulated
along loops.

\vspace{0.5\baselineskip}

The analogy with holonomy in differential geometry is again suggestive.
In geometry,
nontrivial holonomy measures the obstruction to globally trivializing a
connection.
Similarly,
homological arbitrage represents the failure of transport around a loop
to preserve probabilistic structure globally.

\vspace{0.5\baselineskip}

Therefore,
homological arbitrage should be viewed not as a local pricing anomaly,
but as a global transport phenomenon encoded in the cohomological and
holonomy structure associated with the filtration.


\section{Examples}
\label{sec:examples}

In this section, we present three examples illustrating the transport
cohomology and holonomy structures developed in the previous sections.
The examples are chosen to emphasize how global loop effects may emerge
from transport along simplicial histories, even when the local
transitions themselves appear compatible.

The first example describes a transport-consistent simplex with
probabilistically trivial holonomy. 
The subsequent two examples illustrate
how nontrivial probabilistic distortion may accumulate along loops,
leading to nontrivial holonomy phenomena. From this viewpoint, the
essential source of loop effects is the probabilistic distortion
generated by transport around closed simplicial histories.

\subsection{A transport-consistent simplex}

We begin with a simple example illustrating a transport-consistent
simplex with trivial probabilistic distortion.

Let
$
\mathcal T
$
be the category generated by the composable arrows
\[
t_0
\xrightarrow{i_1}
t_1
\xrightarrow{i_2}
t_2 .
\]

Define a contravariant functor
\[
F:\mathcal T^{op}\to\Prob
\]
by
\[
F(t_0)
	:=
F(t_1)
	:=
F(t_2)
	:=
\left(
\{0,1\},
2^{\{0,1\}},
\mu
\right),
\]
where
\[
\mu(\{0\})
	:=
\mu(\{1\})
	:=
\frac12 .
\]

All measurable maps are defined by the identity:
\[
F(i_1)
	:=
F(i_2)
	:=
\mathrm{id}_{\{0,1\}} .
\]

Let
\[
\sigma
	:=
\left(
t_0
\xrightarrow{i_1}
t_1
\xrightarrow{i_2}
t_2
\right)
\in N_2(\mathcal T).
\]

Since all measurable maps preserve the probability measure,
the induced transport operators satisfy
\[
T^\sigma_{0,1}
=
T^\sigma_{1,2}
=
T^\sigma_{0,2}
=
\mathrm{id}.
\]

Hence,
transport composition is compatible:
\[
T^\sigma_{0,1}
\circ
T^\sigma_{1,2}
=
T^\sigma_{0,2}.
\]

\vspace{0.5\baselineskip}

Define \(m_0,m_1,m_2\) by
\[
m_0:=0,\qquad m_1:=c,\qquad m_2:=2c.
\]
For 
$
p, q \in [2]
$,
define
$
\theta_{pq} \in \Delta([1],[2])
$
by
\[
\theta_{pq}(0):=p,
\qquad
\theta_{pq}(1):=q.
\]

\medskip
\noindent
Now,
for every \(\theta\in\Delta([1],[2])\), define
$
a \in C_{\sigma}^1
$
by
\[
a(\theta):=m_{\theta(1)}-m_{\theta(0)}.
\]
Then
\[
a(\theta_{00})=a(\theta_{11})=a(\theta_{22})=0,
\]
and
\[
a(\theta_{01})=c,\qquad
a(\theta_{12})=c,\qquad
a(\theta_{02})=2c.
\]

Let
\[
\varphi\in\Delta([2],[2]).
\]
Write
\[
r_0:=\varphi(0),\qquad
r_1:=\varphi(1),\qquad
r_2:=\varphi(2).
\]
Since all transport operators are identities in this example, we have
\begin{align*}
\delta^2_\sigma(a)(\varphi)
	&=
a(\varphi\circ\delta^2_0)
-
a(\varphi\circ\delta^2_1)
+
a(\varphi\circ\delta^2_2)
	\\&=
a\big(\theta_{r_1,r_2}\big)
	-
a\big(\theta_{r_0,r_2}\big)
	+
a\big(\theta_{r_0,r_1}\big).
\end{align*}
By the definition of \(a\),
\[
a\big(\theta_{r_1,r_2}\big)
	=
m_{r_2}-m_{r_1},
\]
\[
a\big(\theta_{r_0,r_2}\big)
	=
m_{r_2}-m_{r_0},
\]
and
\[
a\big(\theta_{r_0,r_1}\big)
	=
m_{r_1}-m_{r_0}.
\]
Therefore
\[
\delta^2_\sigma(a)(\varphi)
=
(m_{r_2}-m_{r_1})
-
(m_{r_2}-m_{r_0})
+
(m_{r_1}-m_{r_0})
=
0.
\]
Since this holds for every
\[
\varphi\in\Delta([2],[2]),
\]
we conclude that
\[
a\in Z^1_\sigma=\ker\delta^2_\sigma.
\]

\vspace{0.5\baselineskip}

This example illustrates a transport system with completely compatible
local transports and no probabilistic distortion. In particular, all
transport operators preserve the underlying probabilistic structure, and
no nontrivial holonomy effect appears.

\subsection{A probabilistic loop with nontrivial holonomy}

We next present a finite example in which a loop produces a nontrivial
probabilistic transport effect.

Let \(\mathcal T\) be the category generated by
\[
t_0
\xrightarrow{i_1}
t_1
\xrightarrow{i_2}
t_2
\xrightarrow{i_3}
t_0 .
\]

Define
\[
\Omega_0 := \Omega_2 := \{0,1\},
\qquad
\Omega_1 := \{\ast\}.
\]
Let
\[
\mu_0(\{0\}):=\mu_0(\{1\}):=\frac12,
\qquad
\mu_1(\{\ast\}):=1,
\]
and
\[
\mu_2(\{0\}):=\frac14,
\qquad
\mu_2(\{1\}):=\frac34.
\]

Set
\[
F(t_\ell)
	:=
(\Omega_\ell,2^{\Omega_\ell},\mu_\ell),
\qquad
\ell=0,1,2,
\]
and
$
F(t_3) := F(t_0).
$

The measurable maps are defined by
\[
F(i_1):\Omega_1\to\Omega_0,
\qquad
F(i_1)(\ast):=1,
\]
\[
F(i_2):\Omega_2\to\Omega_1,
\]
the unique map, and
\[
F(i_3):\Omega_0\to\Omega_2,
\qquad
F(i_3)(x):=x.
\]

Let
\[
\sigma
	:=
\left(
t_0
\xrightarrow{i_1}
t_1
\xrightarrow{i_2}
t_2
\xrightarrow{i_3}
t_0
\right)
\in N_3(\mathcal T).
\]

We compute the \(\sigma\)-gauge measures.
By definition,
\[
\mu^\sigma_3=\mu_0.
\]
Since \(F(i_3)\) is the identity map from \(\Omega_0\) to \(\Omega_2\),
we have
\[
\mu^\sigma_2
=
\mu^\sigma_3\circ F(i_3)^{-1}
=
\mu_0.
\]
Next,
\[
\mu^\sigma_1
=
\mu^\sigma_2\circ F(i_2)^{-1}
=
\delta_* .
1_{\{*\}} .
\]
Finally,
since \(F(i_1)(\ast)=1\),
\[
\mu^\sigma_0
=
\mu^\sigma_1\circ F(i_1)^{-1}
=
\delta_1 .
\]

Thus,
although \(t_0=t_3\), the initial and terminal measures in the
\(\sigma\)-gauge are
\[
\mu^\sigma_0
=
\delta_1,
\qquad
\mu^\sigma_3=\mu_0.
\]

In particular,
\[
\mu^\sigma_0
\not\sim
\mu^\sigma_3.
\]

Therefore the loop \(\sigma\) produces a nontrivial probabilistic
holonomy effect.

The holonomy operator is
\[
\Hol(\sigma)
=
T^\sigma_{0,3}
:
L^1(F^\sigma_3)
\to
L^1(F^\sigma_0).
\]

For \(f\in L^1(F^\sigma_3)=L^1(\{0,1\},\mu_0)\), the transport is given by
\[
\Hol(\sigma)(f)
=
f(1),
\]
viewed as a function on the one-point support of
\(\mu^\sigma_0
	=
\delta_1\).

Thus the loop collapses the terminal probabilistic state to the event
\(\{1\}\) at the initial occurrence of \(t_0\).

\vspace{0.5\baselineskip}

This example shows that even though the simplex begins and ends at the
same object \(t_0\), the probabilistic structure after transport around
the loop need not return to the original one. The non-equivalence
\[
\mu^\sigma_0\not\sim\mu^\sigma_3
\]
exhibits a genuinely nontrivial loop effect.

In the terminology of 
Section \ref{sec:homologicalArb},
this is a simple example of
nontrivial holonomy generated by a finite categorical filtration.

\subsection{A probabilistic loop with measure distortion}

We next construct a loop whose holonomy generates a nontrivial
probabilistic distortion while keeping the underlying measurable space
fixed throughout the loop.

Let
\[
\Omega
	:=
\{0,1\},
\qquad
\mathcal F
	:=
2^\Omega .
\]

Define probability measures
\[
\mu_0(\{0\})
	:=
\mu_0(\{1\})
	:=
\frac12 ,
\]
and
\[
\mu_1(\{0\})
	:=
\frac13,
\qquad
\mu_1(\{1\})
	:=
\frac23 .
\]

Let
$
\mathcal T
$
be the category generated by
\[
t_0
\xrightarrow{i_1}
t_1
\xrightarrow{i_2}
t_2
\xrightarrow{i_3}
t_0 .
\]

Define
\[
F(t_0)
:=
(\Omega,\mathcal F,\mu_0),
	\qquad
F(t_1)
:=
(\Omega,\mathcal F,\mu_1),
	\quad \text{and} \quad
F(t_2)
:=
(\Omega,\mathcal F,\mu_0).
\]

All measurable maps are taken to be the identity map
\[
F(i_1)
	:=
F(i_2)
	:=
F(i_3)
	:=
\mathrm{id}_\Omega .
\]

Consider the loop
\[
\sigma
	:=
\left(
t_0
\xrightarrow{i_1}
t_1
\xrightarrow{i_2}
t_2
\xrightarrow{i_3}
t_0
\right)
\in N_3(\mathcal T).
\]

Since all measurable maps are identities,
the transport acts on measurable functions trivially.
However,
the transported measures change along the loop.

Indeed,
the \(\sigma\)-gauge measures are computed inductively by pullback along
the identity maps.
Hence,
\[
\mu^\sigma_3
=
\mu_0,
	\qquad
\mu^\sigma_2
=
\mu_0,
	\qquad
\mu^\sigma_1
=
\mu_0,
	\quad \text{and} \quad
\mu^\sigma_0
=
\mu_0.
\]

Thus,
the resulting holonomy is probabilistically trivial:
\[
\mu^\sigma_0
=
\mu^\sigma_3.
\]

Nevertheless,
the intermediate transport passes through a distinct probabilistic state
\[
(\Omega,\mathcal F,\mu_1).
\]

In particular,
the loop contains a nontrivial internal distortion despite producing no
net distortion after completing the loop.

\vspace{0.5\baselineskip}

Now modify the final arrow by defining
\[
F(i_3)(0):=1,
\qquad
F(i_3)(1):=1 .
\]

Then the induced pullback measure becomes
\[
\mu^\sigma_2
=
\delta_1,
\]
and consequently
\[
\mu^\sigma_0
=
\delta_1 .
\]

Hence,
\[
\mu^\sigma_0
\not\sim
\mu^\sigma_3.
\]

Therefore the modified loop produces a genuinely nontrivial holonomy.

The associated holonomy operator
\[
\Hol(\sigma)
:
L^1(\Omega,\mu_0)
\to
L^1(\Omega,\delta_1)
\]
collapses transport onto the event \(\{1\}\).

\vspace{0.5\baselineskip}

This example illustrates that nontrivial holonomy may arise even when
all stages of the filtration are defined on the same measurable space.
The essential source of the loop effect is the probabilistic distortion
generated by transport around the loop.


\section{Conclusion}
\label{sec:conclusion}

In this paper, we introduced a transport-cohomological framework for
categorical filtrations by associating simplicial transport complexes to
parametrized simplices in the nerve of a time category. Starting from a
contravariant filtration
\[
F:\mathcal T^{op}\to\Prob,
\]
we constructed local cochain complexes equipped with transport
operators induced by conditional expectation functors and studied their
associated cohomological structures.

A central point of the present formulation is the transition from
global cochain systems to simplex-local transport complexes. While
global martingale structures naturally appear at the level of
\(0\)-cochains through the characterization
\[
\mathbf{Mart}(F)=\ker\delta^1,
\]
higher-order transport phenomena are inherently local and depend on the
transport history encoded by a fixed simplex
\(
\sigma
\).
This led to the introduction of the \(\sigma\)-gauge complexes
\(
C^\bullet_\sigma
\),
whose cohomology measures compatibility and obstruction phenomena along
parametrized simplicial transports.

The resulting framework naturally produces loop effects and holonomy
structures. In particular, loops in the nerve of
\(
\mathcal T
\)
may generate nontrivial probabilistic distortions through transport
around closed simplicial histories. The associated holonomy operators
describe accumulated transport effects between probabilistic states,
possibly altering the underlying measure structure even when the initial
and terminal objects coincide.

This viewpoint leads to the notion of homological arbitrage, understood
not as a local pricing inconsistency but as a global transport effect
emerging from loop transport and detected through cohomological and
holonomy structures. From this perspective, the essential source of
loop effects is the probabilistic distortion generated by transport
around closed simplicial histories.

The present formulation is structurally close to holonomy theory and
parallel transport in differential geometry. In particular, the
transport operators introduced here behave analogously to probabilistic
parallel transports, while nontrivial holonomy represents a global
obstruction to probabilistic trivialization along loops.

Several directions remain for future investigation. These include the
relationship between transport cohomology and additive holonomy,
connections with gauge-theoretic formulations of financial markets,
higher-dimensional transport structures, and the development of
continuous or geometric limits of the present simplicial framework.
Another important problem is the precise relation between nontrivial
holonomy and realizable arbitrage phenomena in financial markets.

We hope that the present transport-cohomological viewpoint provides a
useful geometric framework for studying probabilistic transport
structures and global loop effects in categorical models of finance.

\def \InBib {1}

\end{document}